\newcommand{\BigO}{\mathrm{O}}
\newcommand{\BigOmega}{\mathrm{\Omega}}
\newcommand{\littleo}{\mathrm{o}}
\newcommand{\heading}[1]{\vspace{1ex}\par\noindent{\bf #1}}
\newlength{\fparwidth}
\newcommand\framedpar[1]{\begin{center}\framebox{~\begin{minipage}
{\fparwidth}\vspace{1mm}#1\vspace{1mm}
\end{minipage}~}\end{center}}
\def\immediateFigure#1{%
\smallskip\begin{center}#1\end{center}\smallskip }
\newcommand{\iepsfig}[1]
{\immediateFigure{\mbox{\psfig{file=#1.eps}}}}
\newtheorem{theorem}{Theorem}
\newtheorem{lemma}{Lemma}
\newtheorem{proposition}{Proposition}
\begin{document}

\title{Minimum and maximum against $k$ lies}%

\date{\today}

\author{
Michael Hoffmann\\
Institute of Theoretical Computer Science,\\ 
ETH Zurich, Switzerland \\
E-mail: \texttt{hoffmann@inf.ethz.ch}
\and
Ji\v{r}\'{\i} Matou\v{s}ek\\
Department of Applied Mathematics and\\
Institute for Theoretical Computer Science (ITI),\\
Charles University, Prague, Czech Republic \\
E-mail: \texttt{matousek@kam.mff.cuni.cz}
\and
Yoshio Okamoto
\thanks{%
Supported by Global COE Program
  ``Computationism as a Foundation for the Sciences''
  and Grant-in-Aid for Scientific Research from Ministry
  of Education, Science and Culture, Japan, and
  Japan Society for the Promotion of Science.
}\\
Graduate School of Information Science and Engineering, \\
Tokyo Institute of Technology, Japan \\
E-mail: \texttt{okamoto@is.titech.ac.jp}
\and
Philipp Zumstein\\
Institute of Theoretical Computer Science, \\
ETH Zurich, Switzerland \\
E-mail: \texttt{zuphilip@inf.ethz.ch}
}

\maketitle

\begin{abstract}
  A neat 1972 result of Pohl asserts that
$\lceil 3n/2 \rceil -2$ comparisons are sufficient, and also necessary
  in the worst case, for finding both the minimum and the maximum
  of an $n$-element totally ordered set. The set
   is accessed via an oracle for pairwise comparisons.
  More recently, the problem has been studied
  in the context of the R\'enyi--Ulam liar games,
  where the oracle may give up to $k$ false answers.
For large $k$, an upper bound due to
 Aigner shows that $(k+\BigO(\sqrt{k}))n$ 
 comparisons suffice.
 We improve on this by providing an algorithm with
at most $(k+1+C)n+\BigO(k^3)$ comparisons for some constant $C$.
The known lower bounds
are of the form $(k+1+c_k)n-D$, for some constant $D$, where $c_0=0.5$,
$c_1=\frac{23}{32}= 0.71875$, 
and $c_k=\BigOmega(2^{-5k/4})$ as $k\to\infty$.
\end{abstract}

\section{Introduction}

We consider an $n$-element set $X$ with an unknown total ordering $\leq$.
The ordering can be accessed via an oracle that,
given two elements $x,y\in X$, tells us whether $x<y$ or $x>y$.
It is easily seen that  the minimum element of $X$ can be found
using  $n-1$ comparisons.  This is optimal in the sense that $n-2$ 
comparisons are not enough to find the minimum element in the worst case.

One of the nice little surprises in computer science is that
if we want to find \emph{both} the minimum and the maximum,
we can do significantly \emph{better} than finding the minimum
and the maximum separately. Pohl~\cite{Pohl72} proved
that  $\lceil 3n/2 \rceil -2$ is the optimal number of comparisons 
for this problem ($n\geq 2$). The algorithm first partitions the
elements of $X$ into pairs and makes a comparison in each pair.
The minimum can then be found among the ``losers'' of these comparisons,
while the maximum is found among the ``winners.''

Here we consider the problem of determining both the minimum and the maximum
in the case where the oracle is not completely
reliable: it may sometimes give a false answer, but
only at most $k$ times during the whole computation, 
where $k$ is a given parameter.

We refer to this model as \emph{computation against $k$ lies}.
Let us stress that we admit repeating the same query
to the oracle several times, and each false answer
counts as a lie. This seems to be the most sensible 
definition---if repeated queries were not allowed,
or if the oracle could always give the wrong answer to
a particular query, then the minimum cannot be determined.

So, for example, if we repeat a given query $2k+1$ times,
we always get the correct answer by majority vote.
Thus, we can simulate any algorithm with a reliable oracle,
asking every question $2k+1$ times, but for the problems
considered here, this is not a very efficient way, as we will see.

The problem of finding both the minimum and the maximum
against $k$ lies was investigated by Aigner~\cite{Aigner97},
who proved that $(k+\BigO(\sqrt{k}))n$ comparisons always 
suffice.\footnote{Here
and in the sequel, $\BigO(.)$ and $\BigOmega(.)$ hide only
absolute constants, independent of both $n$ and~$k$.}
We improve on this  as follows.

\begin{theorem}\label{thm:1}
  There is an algorithm that finds both the minimum
and the maximum among $n$ elements against $k$ lies using
at most $(k+1+C)n+\BigO(k^3)$ comparisons,
 where $C$ is a constant.
\end{theorem}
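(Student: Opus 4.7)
The plan is to adapt Pohl's optimal $\lceil 3n/2\rceil-2$ algorithm to the $k$-lie setting in three phases. Phase~1 (\emph{pairing}) partitions $X$ into $n/2$ disjoint pairs and performs one comparison per pair, producing a set $W$ of $n/2$ tentative winners and a set $L$ of $n/2$ tentative losers at a cost of $n/2$ comparisons. Phase~2 (\emph{extremum search on each side}) invokes a $k$-lie-tolerant min-finding subroutine on $L$ and a $k$-lie-tolerant max-finding subroutine on $W$. Phase~3 (\emph{reconciliation}) patches up the possibility that lies in Phase~1 placed the true minimum in $W$ or the true maximum in $L$.

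For the Phase~2 subroutines I would target a cost of $(k+1)m+\BigO(k^{2})$ comparisons on $m$ elements. The lower bound $(k+1)(m-1)$ comes from the standard observation that each non-minimum needs at least $k+1$ certified ``win'' marks to rule it out against an adversary with $k$ lies; a matching upper bound is obtained by a tournament that always compares two elements with the fewest wins so far and retires any element whose win-count reaches $k+1$, with a short $\BigO(k^{2})$ cleanup once only a handful of candidates remain. A potential function such as $\Phi=\sum_{x}\max\{0,(k+1)-w(x)\}$ decreases by one with every useful comparison, which gives the bound. Applied to the two halves in turn, Phase~2 uses at most $(k+1)n+\BigO(k^{2})$ comparisons.

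If no lie occurred in Phase~1, Phases~1--2 already return the correct answers. In general, each Phase~1 lie misclassifies only one pair, so at most $k$ elements are misplaced. For Phase~3 I would extract the ``suspicious'' winners that might still be smaller than the candidate $\hat{x}$ (minimum of $L$) and invoke a $k$-lie-tolerant min-finding on $\hat{x}$ together with those suspicious elements, symmetrically for the maximum side. The main obstacle will be showing that the suspicious set has size only $\BigO(k)$: a suspicious winner would need an unusually low $w$-count after Phase~2, but a generic max-finding routine does not guarantee that few elements end up with low win-counts. Resolving this requires Phase~2 to balance both the win- and loss-counts of the elements it processes, charging the imbalance against the remaining Phase~1 lie budget, so that only $\BigO(k)$ ``low-win'' winners survive. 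Once this is established, Phase~3 costs $\BigO(k^{3})$, and the three phases together use
\[
n/2 + (k+1)n + \BigO(k^{3}) \;=\; \bigl(k+1+\tfrac{1}{2}\bigr)n + \BigO(k^{3})
\]
comparisons, proving the theorem with $C=\tfrac{1}{2}$.
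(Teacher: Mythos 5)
Your decomposition (Pohl pairing, then a one-sided lie-tolerant search on each half, then a reconciliation step) is a genuinely different route from the paper's, which partitions $X$ into $n/k$ groups of size $s=k$, sorts each group, and uses a thickness/network-flow lemma (Lemma~\ref{l:addedges}) to add verification comparisons economically. However, the gap you yourself flag is real and is not a technicality. A lie-tolerant max-finding routine on $W$ only guarantees that each eliminated element accumulates $k+1$ \emph{loss}-certificates; it says nothing about win-counts. A perfectly valid such routine repeatedly pits the current champion against a fresh challenger $k+1$ times, eliminating one element per round: this leaves $|W|-1=\BigOmega(n)$ winners with \emph{zero} Phase-2 wins, all of which remain ``suspicious'' as potential minima. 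The proposed fix---``balance win- and loss-counts, charging the imbalance against the remaining Phase-1 lie budget''---is not worked out, and for constant $k$ a lie budget of size $k$ cannot absorb an $\BigOmega(n)$-size imbalance.

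In fact, the bound you claim cannot be correct, which shows something essential is missing. Your tally gives $(k+\tfrac32)n+\BigO(k^3)$ comparisons, i.e.\ $C=\tfrac12$. For $k=1$ that is $\tfrac52 n+\BigO(1)$, but the Gerbner--P\'alv\"olgyi--Patk\'os--Wiener lower bound quoted in the introduction says at least $\lceil \tfrac{87}{32}n\rceil-3$ comparisons are needed when one lie is allowed, and $\tfrac{87}{32}=2.71875>\tfrac52$. So no sharpening of Phase~3 can rescue $C=\tfrac12$: for $k\ge 1$ the pairing step plus two independent one-sided searches must waste $\BigOmega(n)$ comparisons, and quantifying and controlling that waste is exactly what the paper's thickness lemma does. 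The missing ingredient in your plan is a Phase~2 that simultaneously builds up both the ``larger-than'' and ``smaller-than'' certificate counts with controlled waste; the paper achieves this only inside small groups of size $k$, where sorting is cheap and a max-flow/min-cut argument determines the minimal number of extra comparisons needed, at the price of a larger absolute constant $C$.
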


Our proof yields the constant $C$ reasonably small
(below $10$, say,
at least if $k$ is assumed to be sufficiently large), 
but we do not try to optimize it.

\heading{Lower bounds. } 
The best known lower bounds for the number of comparisons
necessary to determine both the minimum and the maximum against $k$ lies
have the form $(k+1+c_k)n-D$, where $D$ is a small constant and
the $c_k$ are as follows:
\begin{itemize}
\item $c_0=0.5$, and this is the best possible.
This is the result of Pohl~\cite{Pohl72} for a truthful oracle mentioned above.
\item $c_1=\frac{23}{32}=0.71875$, and this is again tight.
This follows from a recent work by 
Gerbner, P\'alv\"olgyi, Patk\'os, and Wiener~\cite{GPPW}
who determined
the optimum number of comparisons for $k=1$ up to
a small additive constant: it lies
between $\lceil \frac{87}{32}n \rceil -3$
and $\lceil \frac{87}{32}n \rceil +12$. This proves a conjecture
of Aigner~\cite{Aigner97}.
\item $c_k=\BigOmega(2^{-5k/4})$ for all $k$, as was shown by
 Aigner~\cite{Aigner97}.
\end{itemize}

The optimal
constant $c_1=\frac{23}{32}$ indicates that obtaining precise answers
for $k>1$ may be difficult.

\heading{Related work. } The problem of determining the minimum
alone against $k$ lies was resolved
 by Ravikumar, Ganesan, and Lakshmanan~\cite{RGL87}, who proved
that finding the minimum against $k$ lies can be performed
by using at most $(k+1)n-1$ comparisons, and this is optimal in
the worst case.

The problem considered in this paper belongs to the area
of \emph{searching problems against lies} and, in a wider
context, it is an example of ``computation in the presence of errors.''
This field has a rich history and beautiful results.
A prototype problem, still far from completely solved,
is the \emph{R\'enyi--Ulam liar game} from the 1960s, where one wants to 
determine an unknown integer $x$ between
$1$ and $n$, an oracle provides comparisons of $x$ with
specified numbers, and it may give at most $k$ false answers.
We refer to the surveys by Pelc~\cite{Pelc} and
by Deppe~\cite{Deppe} for more information.

\section{A simple algorithm}

Before proving Theorem~\ref{thm:1}, we explain a simpler
algorithm, which illustrates the main ideas but yields
a weaker bound. We begin with formulating a generic algorithm,
with some steps left unspecified. Both the simple algorithm
in this section and an improved algorithm in the next
sections are instances of the generic algorithm.

\framedpar{
\begin{center}\bf The generic algorithm \end{center}

\begin{enumerate}
\item \label{alg1:partition}
  For a suitable integer parameter $s=s(k)$, 
  we arbitrarily partition the considered $n$-element set
  $X$ into $n/s$ groups $X_1,\ldots,X_{n/s}$ of size $s$ 
each.\footnote{If $n$ is not divisible by $s$, we can form an extra
group smaller than $s$
and treat it separately, say---we will not bore the reader with
the details.}
\item \label{alg1:sort}
  In each group $X_i$, we find the minimum $m_i$ and the maximum $M_i$.
The method for doing this is left unspecified in the generic
algorithm.
\item \label{alg1:find}
   We find the minimum of $\{m_1,\ldots,m_{n/s}\}$
against $k$ lies, and independently, we find the
maximum of $\{M_1,M_2,\ldots,M_{n/s}\}$
against $k$ lies.
\end{enumerate}
}

The correctness of the generic algorithm is clear, provided that
Step~\ref{alg1:sort} is implemented correctly. 
Eventually, we set $s:=k$ in the simple and in the improved algorithm. 
However, we keep $s$ as a separate parameter, because the choice $s:=k$ is in a sense accidental.

In the simple algorithm we implement Step~\ref{alg1:sort} as follows.

\medskip

\framedpar{
\begin{center}\bf Step \ref{alg1:sort} in the simple algorithm
\end{center}
\begin{enumerate}
  \renewcommand{\theenumi}{\ref{alg1:sort}.\arabic{enumi}}
\item \label{again} (Sorting.) We sort the elements of $X_i$ 
by an asymptotically optimal sorting algorithm,
say mergesort, using $\BigO(s\log s)$ comparisons,
and ignoring the possibility of lies. Thus,
we obtain an ordering $x_1,x_2,\ldots,x_s$
of the elements of $X_i$ such that \emph{if} all queries during the sorting
have been answered correctly, \emph{then} 
$x_1< x_2< \cdots < x_s$.
If there was at least one false answer, we make no assumptions,
except that the sorting algorithm does not crash and outputs
some ordering.
\item \label{verify} (Verifying the minimum and maximum.)
For each $j=2,3,\ldots,s$, we query the oracle
$k+1$ times with the pair $x_{j-1},x_{j}$. If any of these
queries returns the answer $x_{j-1}>x_{j}$, we \emph{restart}:
We go back
to Step~\ref{again} and repeat the computation for the group $X_i$
from scratch. Otherwise, if all the answers are $x_{j-1}<x_{j}$,
we proceed with the next step. 
\item\label{allright}
We set $m_i:=x_1$ and $M_i:=x_s$.
\end{enumerate}
}
\medskip

\begin{lemma}[Correctness]
The simple algorithm always correctly computes
the minimum and the maximum against $k$ lies.
\end{lemma}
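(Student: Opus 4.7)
The plan is to combine three ingredients: first, that each successful pass of the verification step~\ref{verify} on a group $X_i$ actually certifies the true minimum and maximum of that group; second, that the restart loop for each group terminates; and third, that the subroutines invoked in Step~\ref{alg1:find} are already known to be correct.

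For the first ingredient I would exploit the global lie budget. The oracle lies at most $k$ times throughout the \emph{entire} execution, so in particular during any single invocation of Step~\ref{verify} on a pair $(x_{j-1},x_j)$. If all $k+1$ queries on that pair return ``$x_{j-1}<x_j$'' yet the true relation were $x_{j-1}>x_j$, the oracle would have to lie at least $k+1$ times just on that pair, contradicting the budget. Hence whenever Step~\ref{verify} accepts a group, we have $x_1<x_2<\cdots<x_s$ in the true order, and so $m_i=x_1$ and $M_i=x_s$ are correct.

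For the second ingredient I would use a charging argument. A restart is triggered by some verification query returning ``$x_{j-1}>x_j$''. If that answer is false (the truth being $x_{j-1}<x_j$), the restart is charged to that lie directly. Otherwise the truth is $x_{j-1}>x_j$, meaning the immediately preceding mergesort in Step~\ref{again} produced an order inconsistent with the truth; since mergesort applied to truthful answers always outputs the true sorted order, at least one lie must have occurred during that sort, and the restart is charged there. Lies paid for distinct restarts are automatically disjoint, because each restart's sort and each verification round are executed afresh; hence the total number of restarts across the whole execution is at most $k$, and the algorithm halts.

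Putting these together, every $m_i$ and $M_i$ returned by Step~\ref{alg1:sort} is the true minimum and maximum of $X_i$. Since $\min X=\min_i m_i$ and $\max X=\max_i M_i$, Step~\ref{alg1:find} then outputs the correct answers by invoking the known optimal algorithm for finding a single minimum (resp.\ maximum) against $k$ lies. The main obstacle is the charging step in the termination argument; everything else is bookkeeping against the global lie budget.
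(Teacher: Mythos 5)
Your core argument is the same as the paper's: the $k+1$ confirmations of ``$x_{j-1}<x_j$'' for each $j$ cannot all be lies within a global budget of $k$, so a group that passes Step~\ref{verify} has its true minimum and maximum identified (the paper phrases this as ``each $x_j$, $j\ge2$, cannot be the minimum,'' which is the same observation). You additionally spell out termination and the correctness of Step~\ref{alg1:find}; the paper defers the termination/charging argument to the complexity lemma and takes Step~\ref{alg1:find} for granted, but your treatment of both is sound and does not change the substance.
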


\begin{proof}
We note that once the processing of the group $X_i$
in the above algorithm reaches Step~\ref{allright}, then
$m_i=x_1$ has to be the minimum. Indeed, for every other element
$x_j$, $j\ge 2$, the oracle has answered $k+1$ times that
$x_j>x_{j-1}$, and hence $x_j$ cannot be the minimum.
Similarly, $M_i$ has to be the maximum,
and thus the algorithm is always correct.
\end{proof}

Actually, at Step~\ref{allright} we can be sure that
$x_1,\ldots,x_s$ is the sorted order of $X_i$, but in
the improved algorithm in the next section the situation will be
more subtle.
The next lemma shows, that
the simple algorithm already provides an improvement of Aigner's
bound of $(k+\BigO(\sqrt k))n$.

\begin{lemma}[Complexity]
The number of comparisons of the simple algorithm
for $s=k$ on an $n$-element set
is $(k+\BigO(\log k))n + \BigO(k^3)$.
\end{lemma}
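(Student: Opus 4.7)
The plan is to bound, separately, the cost of Step~\ref{alg1:sort} summed over all groups and the cost of Step~\ref{alg1:find}, and then to combine them after setting $s:=k$.

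For a single invocation of Step~\ref{alg1:sort} on one group $X_i$, an ``attempt'' consists of the sort in Step~\ref{again} followed by the verification in Step~\ref{verify}. An attempt costs $\BigO(s\log s)$ comparisons for the sort plus at most $(k+1)(s-1)$ comparisons for the verification, so at most $(k+1)(s-1)+\BigO(s\log s)$ comparisons per attempt. Every group needs exactly one \emph{successful} attempt (one that reaches Step~\ref{allright}), contributing in total $(n/s)\bigl[(k+1)(s-1)+\BigO(s\log s)\bigr]= (k+1)n+\BigO\!\left(\frac{n\log s}{1}\right)$ (up to lower-order terms in $n$). The delicate part is to bound the cost of the \emph{failed} attempts, i.e., those that trigger a restart.

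The key observation, which I would state as a short sub-lemma, is that every failed attempt forces the oracle to have lied at least once during that attempt. Indeed, a restart is triggered only when some verification query returns $x_{j-1}>x_j$; if the sort in Step~\ref{again} produced the correct order of $X_i$, then this verification answer is a lie; otherwise, the sort itself must have been misled by at least one lie. Either way, at least one lie is consumed by each failed attempt, so summed over the entire run of the algorithm the number of failed attempts is at most~$k$. Hence the total cost of all failed attempts is bounded by $k\cdot\bigl[(k+1)(s-1)+\BigO(s\log s)\bigr]=\BigO(k^2 s+ks\log s)$.

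For Step~\ref{alg1:find}, the two subproblems are instances of finding the minimum (resp.\ maximum) against $k$ lies on a set of $n/s$ elements, which by the Ravikumar--Ganesan--Lakshmanan result cost at most $(k+1)(n/s)-1$ comparisons each, for a total of $2(k+1)n/s$. Adding everything together yields
\[
(k+1)n + \BigO(n\log s) + \BigO(k^2 s + ks\log s) + \BigO\!\left(\tfrac{kn}{s}\right).
\]
Setting $s:=k$ collapses the last term into $\BigO(n)$, turns $\BigO(n\log s)$ into $\BigO(n\log k)$, and turns $\BigO(k^2 s+ks\log s)$ into $\BigO(k^3)$, giving the claimed bound $(k+\BigO(\log k))n+\BigO(k^3)$. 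The only real obstacle is the sub-lemma on restarts; the rest is straightforward bookkeeping.
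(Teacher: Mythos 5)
Your proposal is correct and follows essentially the same approach as the paper: bound the cost per attempt by $(k+1)(s-1)+\BigO(s\log s)$, charge each restart to at least one lie so that there are at most $k$ failed attempts overall, and add the $\BigO((k+1)n/s)$ cost of Step~\ref{alg1:find}. Your two-case argument for the sub-lemma on restarts (a lie during the sort vs.\ a lie during the verification) makes explicit what the paper states more tersely, but the reasoning is the same.
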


\begin{proof}
For processing the group $X_i$ in Step~\ref{alg1:sort}, we need
$\BigO(s\log s)+(k+1)(s-1)=k^2+\BigO(k\log k)$ comparisons, provided that
no restart is required. But since restarts may occur
only if the the oracle lies  at least once,
and the total number of lies is at most $k$, there are
no more than $k$ restarts for all groups together.
These restarts may account for at most $k(k^2+\BigO(k\log k))=
\BigO(k^3)$ comparisons. 
Thus, the total number of comparisons in Step~\ref{alg1:sort}
is $\frac ns(k^2+\BigO(k\log k))+\BigO(k^3)=
(k+\BigO(\log k))n+\BigO(k^3)$. 

As we mentioned in the introduction, 
the minimum (or maximum) of an $n$-element set
against $k$ lies can be found using $(k+1)n-1$ comparisons,
and so Step~\ref{alg1:find} needs no more than
$2(k+1)(n/s)=\BigO(n)$ comparisons. (We do not really need the optimal
algorithm for finding the minimum; any $\BigO((k+1)n)$ algorithm
would do.) The claimed bound on the total number of comparisons
follows.
\end{proof}

\section{The improved algorithm: Proof of Theorem~\ref{thm:1}}

In order to certify that $x_1$ is indeed the minimum of $X_i$,
we want that for every $x_j$, $j\ne 1$, the oracle declares
$x_j$ larger than some other element $k+1$ times. 
(In the simple algorithm, these $k+1$ comparisons
were all made with $x_{j-1}$, but any other smaller elements
will do.) This in itself requires $(k+1)(s-1)$ queries per group,
or $(k+1)(n-n/s)$ in total, which is already close to our target upper bound
in Theorem~\ref{thm:1} (we note that $s$ has to be at least
of order $k$, for otherwise, Step~\ref{alg1:find} of the
generic algorithm would be too costly).

Similarly, every $x_j$, $j\ne s$, should be
compared with smaller elements $k+1$ times, which again needs
$(k+1)(n-n/s)$ comparisons, so all but $\BigO(n)$ comparisons 
in the whole algorithm should better  be
used for \emph{both} of these purposes.

In the simple algorithm, the comparisons  used for sorting
the groups in Step~\ref{again} are, in this sense, 
wasted. The remedy is to use most of them
also for verifying the minimum and maximum in Step~\ref{verify}.
For example, if the sorting algorithm has already 
made comparisons of $x_{17}$ with 23 larger elements,
in the verification step it suffices to compare
$x_{17}$ with $k+1-23$ larger elements.

One immediate problem with this appears if the sorting
algorithm compares $x_{17}$ with some  $b>k+1$ larger elements,
the extra $b-(k+1)$ comparisons are wasted. However,
for us, this will not be an issue, because we will
have $s=k$, and thus each element can be compared
to at most $k-1$ others (assuming, as we may, that the sorting
algorithm does not repeat any comparison).

Another problem is somewhat more subtle. In order to explain
it, let us represent the comparisons made in the sorting
algorithm by edges of an ordered graph. The vertices are $1,2,\ldots,s$,
representing the
elements $x_1,\ldots,x_s$ of $X_i$ in sorted order,
and the edges correspond to the comparisons
made during the sorting, see the figure below on the left.

\iepsfig{aftersort}

In the verification step, we need to make additional comparisons 
so that every $x_j$, $j\ne 1$, has at least $k+1$ comparisons with
smaller elements and every $x_j$, $j\ne s$, has at least $k+1$
comparisons with larger elements. This corresponds to adding
suitable extra edges in the graph, as in the right drawing above
(where $k=2$, and the added edges are drawn on the bottom side).
  
As the picture illustrates, sometimes
we cannot avoid comparing some element with \emph{more}
than $k+1$ larger ones or $k+1$ smaller ones
(and thus some of the comparisons will be ``half-wasted'').
For example, no matter how we add the extra edges,
the elements $x_1,x_2,x_3$ together must participate in at
least $3$ half-wasted comparisons. Indeed, $x_2$ and $x_3$ together
require $6$ comparisons to the left (i.e.\ with a smaller element).  
These comparisons can be ``provided'' only by $x_1$ and $x_2$, which
together want only 6 comparisons to the right---but $3$ of these
comparisons to the right were already made with elements larger
than $x_3$ (these are the arcs intersecting the dotted vertical line
in the picture). 

The next lemma shows that this kind of argument is the only source
of wasted comparisons. For an ordered multigraph $H$ on
the vertex set $\{1,2,\ldots,s\}$ as above, let us define
$t(H)$, the \emph{thickness} of $H$, as $\max\{t(j): j=2,3,\ldots,s-1\}$,
where $t(j):=|\{\{a,b\}\in E(H): a<j<b\}|$ is the number of edges
going ``over'' the vertex~$j$.

\begin{lemma}\label{l:addedges}
Let $H$ be an undirected multigraph without loops on $\{1,2,\ldots,s\}$ such 
that for every vertex $j=1,2,\ldots,s$,
\begin{align*}
d_H^{{\rm left}}(j) & := |\{ \{i,j\}\in E(H) : i<j\}| \leq k+1\,,\\
d_H^{{\rm right}}(j) & := |\{ \{i,j\}\in E(H) : i>j\}| \leq k+1\,.
\end{align*}
Then $H$ can be extended to
a multigraph $\overline H$ by adding edges, so that
\begin{enumerate}
\item[\rm(i)] every vertex $j\ne 1$ has at least $k+1$
left neighbors and every vertex $j\ne s$ has at least $k+1$
right neighbors; and
\item[\rm(ii)] the total number of edges in $\overline H$
is at most $(k+1)(s-1) + t(H)$.
\end{enumerate}
\end{lemma}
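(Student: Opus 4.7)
Set $a_j := (k+1) - d_H^{\mathrm{left}}(j)$ for $j\in\{2,\ldots,s\}$ and $b_i := (k+1) - d_H^{\mathrm{right}}(i)$ for $i\in\{1,\ldots,s-1\}$; these are nonnegative by the hypothesis. Any new edge $\{i,j\}$ with $i<j$ simultaneously contributes one unit to the left-demand $a_j$ and to the right-demand $b_i$, so the lemma amounts to covering all the $a_j, b_i$ by a minimum multiset of such oriented edges. My plan is to pick a pivot $p^\ast\in\{2,\ldots,s-1\}$ with $t(p^\ast)=t(H)$ and build $\overline H$ in two phases. In Phase~1, I add edges within $\{1,\ldots,p^\ast\}$ so that each $j\in\{2,\ldots,p^\ast\}$ acquires $a_j$ new left-neighbors, using exactly $\sum_{j=2}^{p^\ast} a_j$ edges, with lower endpoints chosen so that every $i\in\{1,\ldots,p^\ast-1\}$ also gains at least $b_i$ new right-neighbors. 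Phase~2 is symmetric: I add $\sum_{i=p^\ast}^{s-1} b_i$ edges within $\{p^\ast,\ldots,s\}$ covering the right-demands $b_i$ for $i\geq p^\ast$, routed so that the left-demands $a_j$ for $j>p^\ast$ are met as well.

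The edge count is then a short computation. The edges of $H$ partition into those inside $\{1,\ldots,p^\ast\}$, those inside $\{p^\ast,\ldots,s\}$, and the $t(p^\ast)$ edges strictly crossing $p^\ast$, so $|E(H[\{1,\ldots,p^\ast\}])| + |E(H[\{p^\ast,\ldots,s\}])| = |E(H)| - t(p^\ast)$. Plugging in the definitions of $a_j$ and $b_i$ gives
\[
\sum_{j=2}^{p^\ast} a_j + \sum_{i=p^\ast}^{s-1} b_i \;=\; (k+1)(s-1) - |E(H)| + t(p^\ast) \;=\; (k+1)(s-1) - |E(H)| + t(H),
\]
and hence $|E(\overline H)| = (k+1)(s-1) + t(H)$, matching the claimed bound.

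The main obstacle is showing that the two routings actually exist, i.e., that the free endpoints can be chosen so that both the primary and the secondary demands are met simultaneously in each phase. This is a Hall-type feasibility condition for a transportation problem on an ordered vertex set. For Phase~2 it reads: for every $j\in\{p^\ast+1,\ldots,s\}$, the cumulative supply $\sum_{i=p^\ast}^{j-1} b_i$ is at least the cumulative demand $\sum_{\ell=p^\ast+1}^{j} a_\ell$. A direct edge-counting argument, classifying each edge of $H$ by the positions of its two endpoints relative to $p^\ast$ and $j$, should reduce the difference of these two sums to exactly $t(p^\ast) - t(j)$, which is nonnegative by the choice of $p^\ast$. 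The analogous inequality needed for Phase~1 follows by a symmetric argument, completing the plan.
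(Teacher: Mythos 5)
Your approach is correct and is a genuine variant of the paper's proof, even though both are ultimately flow/transportation arguments. The paper builds a single global bipartite flow network on vertices $\{1^{\pm},\ldots,s^{\pm},a,b\}$ with capacities $k{+}1-d^{\mathrm{right}}_H(j)$ and $k{+}1-d^{\mathrm{left}}_H(j)$ on the source/sink arcs, proves via max-flow\,--\,min-cut that $m^\ast=(k+1)(s-1)-e(H)-t(H)$ edges can be added while keeping all left and right degrees at most $k+1$, and then fills the residual defect $\Delta(H^\ast)=2t(H)$ by attaching the deficient vertices to the extremal vertices $1$ and $s$. You instead locate a vertex $p^\ast$ achieving $t(p^\ast)=t(H)$ (which, not coincidentally, is where the paper's minimum cut sits) and solve two one-sided ordered transportation problems, one on $\{1,\ldots,p^\ast\}$ and one on $\{p^\ast,\ldots,s\}$, each covering the deficits on its own side and absorbing the unavoidable $t(p^\ast)$ excess at the boundary vertex $1$ (respectively $s$). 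This bypasses the degree-cap constraint and the separate patching step of the paper's proof. I checked your two claimed identities and they are right: the prefix Hall gap in Phase~2 is $\sum_{i=p^\ast}^{j-1}b_i-\sum_{\ell=p^\ast+1}^{j}a_\ell=t(p^\ast)-t(j)\geq 0$, symmetrically for Phase~1, and the final edge count is exactly $(k+1)(s-1)+t(H)$.

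To make the argument fully rigorous you would still need to (a)~actually carry out the edge-classification computation you sketch, and (b)~justify that the Hall-type prefix condition is sufficient for feasibility of the ordered transportation problem (e.g.\ via a greedy filling from $j=p^\ast+1$ up to $s$, dumping the residual $t(p^\ast)$ units onto vertex $s$, or by a small max-flow\,--\,min-cut argument on the one-sided network). Both are routine; the conceptual content is all present and correct. One stylistic gain of the paper's version is that the whole existence argument is a single black-box invocation of max-flow\,--\,min-cut over one family of cuts $S_i$; one gain of yours is that the construction is fully explicit and never adds edges whose only purpose is to pad degrees.
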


The proof is a 
network flow argument and therefore constructive.
We postpone it to the end of this section.

For a comparison-based sorting algorithm $\mathcal{A}$,
we define the \emph{thickness} $t_\mathcal{A}(s)$ in 
the natural way: It is the maximum, over all $s$-element
input sequences, of the thickness $t(H)$ of the corresponding
ordered graph $H$ (the vertices of $H$ are ordered as in
the output of the algorithm and each comparison contributes
to an edge between its corresponding vertices). As the above lemma shows, 
the number of comparisons used for the sorting but not
for the verification can be bounded by the thickness of
the sorting algorithm.

\begin{lemma}\label{l:quicksort}
There exists a (deterministic) sorting algorithm $\mathcal{A}$
with thickness $t_{\mathcal{A}}(s)=\BigO(s)$.
\end{lemma}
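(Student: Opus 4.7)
My plan is to take $\mathcal{A}$ to be a balanced version of quicksort. On an input of size $s$, the algorithm first finds an approximate median $p$---an element whose rank $r$ satisfies $s/4 \leq r \leq 3s/4$---using $\BigO(s)$ comparisons, for instance via the classical linear-time selection algorithm of Blum, Floyd, Pratt, Rivest, and Tarjan. It then partitions the remaining $s-1$ elements around $p$ with one comparison each, and recurses on the two resulting subproblems. Correctness and the comparison bound $\BigO(s\log s)$ are standard, so $\mathcal{A}$ is indeed a sorting algorithm.

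To bound the thickness, let $T(s)$ denote the maximum of $t_{\mathcal{A}}(H)$ over all $s$-element inputs. I would fix an input and a rank $j \in \{2,\ldots,s-1\}$ and estimate $t(j)$ as follows. The top-level work (pivot-finding together with the partitioning step) makes $\BigO(s)$ comparisons in total, so trivially at most $\BigO(s)$ of them can straddle $j$. After the partition, the two recursive calls act on the elements whose ranks lie in $\{1,\ldots,r-1\}$ and $\{r+1,\ldots,s\}$, respectively, and every edge produced inside a recursive call has both endpoints in the corresponding rank interval. Consequently, the recursive call on the side of $r$ opposite from $j$ produces no edge straddling $j$ at all; only the other recursive call, operating on at most $3s/4$ elements, can contribute, and by induction it contributes at most $T(3s/4)$. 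This gives the recurrence $T(s) \leq T(3s/4) + \BigO(s)$, whose solution is $T(s) = \BigO(s)$.

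The approach has essentially no hard step: the key idea is to choose a sorting algorithm whose recursive structure respects the sorted order, so that thickness never accumulates across recursive branches. The only point to be careful about is that, after the partition, the two sides really do coincide with the rank intervals $\{1,\ldots,r-1\}$ and $\{r+1,\ldots,s\}$; but this is automatic since the notion of thickness is defined assuming the oracle answers truthfully.
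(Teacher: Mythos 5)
Your proof is correct and is essentially the paper's argument: replace quicksort's arbitrary pivot with a linear-time (approximate or exact) median selection so that the recursion tree respects rank intervals, observe that for a fixed rank $j$ only the one recursive subproblem containing $j$ can produce straddling edges, and solve the resulting recurrence $T(s)\le T(\alpha s)+\BigO(s)$ with $\alpha<1$. The paper uses the exact median (recurrence $T(s)\le T(\lfloor s/2\rfloor)+\BigO(s)$) whereas you allow any pivot of rank in $[s/4,3s/4]$ (recurrence $T(s)\le T(3s/4)+\BigO(s)$); this is an inessential variation, and in fact your write-up spells out more explicitly than the paper why only one branch of the recursion can contribute to $t(j)$.
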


\begin{proof}
The algorithm is based on Quicksort, but in order to control
the thickness, we want to partition
the elements into two groups of equal size in each recursive step.

We thus begin with computing the median of the given elements.
This can be done using $\BigO(s)$ comparisons (see, e.g., Knuth~\cite{Knuth1973};
the current best deterministic algorithm due to Dor and Zwick~\cite{DZ99}
uses no more than $2.95s +\littleo(s)$ comparisons). These algorithms
also divide the remaining elements into two groups,
those smaller than the median and those larger than
the median. To obtain a sorting algorithm, we simply
recurse on each of these groups.

The thickness of this algorithm obeys the recursion
$t_{\mathcal{A}}(s)\le \BigO(s)+t_{\mathcal{A}}(\lfloor s/2\rfloor))$,
and thus it is bounded by $\BigO(s)$.
\end{proof}

We are going to use the algorithm $\mathcal{A}$ from the lemma
in the setting where some of the answers of the oracle may be wrong.
Then the median selection algorithm is not guaranteed to partition
the current set into two groups of the same size and it is not
sure that the running time does not change. 
However, 
we can check if the groups have the right size and if the running
time does not increase too much. If some test goes wrong, we
restart the computation (similar to the simple algorithm).

Now we can describe the improved algorithm,
again by specifying Step~\ref{alg1:sort} of the generic
algorithm.

\medskip
\framedpar{
\begin{center}\bf Step~\ref{alg1:sort} in the improved algorithm
\end{center}
\begin{enumerate}
  \renewcommand{\theenumi}{\ref{alg1:sort}.\arabic{enumi}$'$}
\item \label{again2} (Sorting.) We sort the elements of $X_i$ 
by the algorithm $\mathcal{A}$ with thickness $\BigO(s)$
as in Lemma~\ref{l:quicksort}. If an inconsistency is detected
(as discussed above),
we restart the computation for the group $X_i$
from scratch.
\item \label{verify2} (Verifying the minimum and maximum.)
We create the ordered graph $H$ corresponding
to the comparisons made by $\mathcal{A}$, and 
we extend it to a multigraph $\overline{H}$
according to Lemma~\ref{l:addedges}.
We perform the comparisons corresponding to the added edges.
If we encounter an inconsistency, then we restart:
We go back
to Step~\ref{again2} and repeat the computation for the group $X_i$
from scratch. Otherwise,
we proceed with the next step. 
\item\label{allright2}
We set $m_i:=x_1$ and $M_i:=x_s$.
\end{enumerate}
}
\medskip

\begin{proof}[Proof of Theorem~\ref{thm:1}]
The correctness of the improved algorithm follows
in the same way as for the simple algorithm.
In Step \ref{verify2}, the oracle has declared
every element $x_j$, $j\ne 1$, larger than some other
element $k+1$ times, and so $x_j$ cannot be the minimum.
A similar argument applies for the maximum.

It remains to bound the number of comparisons.
From the discussion above, the number of comparisons is at most
$((k+1)(s-1)+t_{\mathcal{A}}(s))(\frac{n}{s}+k)+2(k+1)\frac{n}{s}$,
with $t_{\mathcal{A}}(s)=\BigO(s)$.
For $s=k$, we thus get that the number of comparisons at most
$(k+1+C)n+\BigO(k^3)$ for some constant $C$, as claimed.
\end{proof}

\begin{proof}[Proof of Lemma~\ref{l:addedges}]
We will proceed in two steps. First, we construct a multigraph $H^*$ from
$H$ by adding a maximum number of (multi)edges such that the left and right 
degree of every vertex are still bounded above by $k+1$. 
Second, we extend $H^*$
to $\overline{H}$ by adding an appropriate number
of edges to each vertex so that condition (i) holds.

For an ordered multigraph $H'$ on $\{1,2,\ldots,s\}$ with left
and right degrees upper bounded by $k+1$, let us define
the \emph{defect} $\Delta(H')$ as
\[
\Delta(H') := \sum_{j=1}^{s-1} (k+1-d_{H'}^{\mathrm{right}}(j))+
\sum_{j=2}^{s} (k+1-d_{H'}^{\mathrm{left}}(j))\,.
\]
We have 
$\Delta(H')=2(k+1)(s-1)-2e(H')$, where $e(H')$ is the number
of edges of~$H'$. 

By a network flow argument, we will show that
by adding suitable $m^*:= (k+1)(s-1)-e(H)-t(H)$
edges to $H$, one can obtain
a multigraph $H^*$ in which all left and right degrees
are still bounded by $k+1$ and such that $\Delta(H^*)
= 2t(H)$. The desired graph $\overline H$
as in the lemma will then be obtained  by adding
 $\Delta(H^*)$ more edges: For example, for every
vertex $j\ge 2$ of $H^*$ with $d_{H^*}^{\rm left}(j)<k+1$,
we add $k+1-d_{H^*}^{\rm left}(j)$ edges connecting $j$ to~$1$,
and similarly we fix the deficient right degrees
by adding edges going to the vertex~$s$.

It remains to construct $H^*$ as above. To this end,
we define an auxiliary directed graph $G$,
where each directed edge $e$ is also assigned an integral
capacity $c(e)$; see \figurename~\ref{fig:flow1}.

\begin{figure}[htbp]
  \centering 
  \subfigure[The graph $G$ with capacities.]{
  \label{fig:flow1}
  \includegraphics[scale=.95]{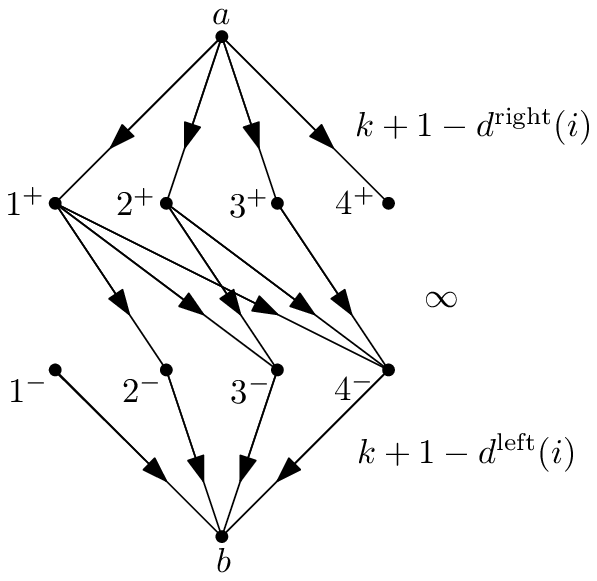}\hspace{0.5cm}}
  \subfigure[The cut $S_2$ in $G$.]{
  \label{fig:flow2}
  \includegraphics[scale=.95]{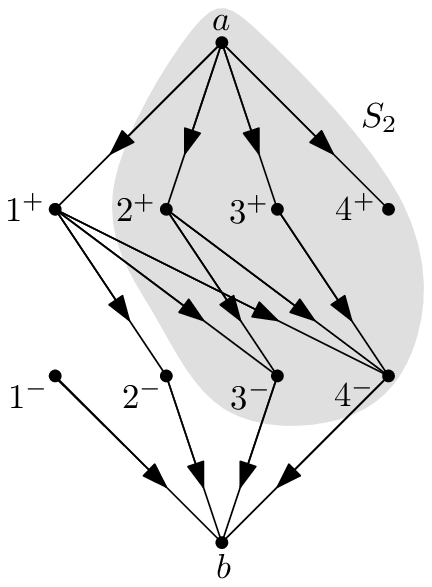}}
  \caption{The directed graph $G$ constructed in the proof of Lemma~\ref{l:addedges}.}
\end{figure}

The vertex set of $G$
consists of a vertex $j^-$ for every $j\in \{1,2,\ldots,s\}$, a
vertex $j^+$ for every $j\in \{1,2,\ldots,s\}$, and two special
vertices $a$ and $b$. There is a directed edge in $G$ from $a$
to every vertex $j^+$ and the capacity of this edge is
$k+1-d_H^{\mathrm{right}}(j)$. Similarly, there is a directed edge in $G$ from
every vertex $j^-$ to $b$, and the capacity of this edge is
$k+1-d_H^{\mathrm{left}}(j)$. Moreover, for every $i,j$
with $1\le i<j\le s$, we put the directed
edge $(i^+,j^-)$ in $G$, 
and the capacity of this edge is $\infty$ (i.e., a sufficiently
large number).

We will check that there is an integral $a$-$b$ flow in $G$
with value $m^*$ in $G$. By the max-flow min-cut theorem~\cite{FF56},
it suffices to show that every $a$-$b$ cut in $G$ 
has capacity at least~$m^*$ and there is an $a$-$b$ cut in $G$
with capacity $m^*$.

Let $S\subseteq V(G)$ be a minimum $a$-$b$ cut.
Let $i$ be the smallest integer such that $i^+\in S$. 
Since the minimum cut cannot use an edge
of unbounded capacity, we have $j^- \in S$ for all $j>i$. 

We may assume without loss of generality that  $j^+ \in S$ for all $j>i$ and $j^-\not\in S$ for
all $j\leq i$ (the capacity of the cut does not decrease by
doing otherwise). Therefore it suffices to consider
 $a$-$b$ cuts  of the form
\[
S_i := \{a\} \cup \{x^+ : x\geq i \} \cup \{x^- : x > i\}
\]
for $i=1,\ldots ,s$. The capacity of $S_i$, see \figurename~\ref{fig:flow2}, equals
\[
\sum_{j<i} c(a,j^+) + \sum_{j > i} c(j^-,b) = (s-1)(k+1) -\sum_{j<i} d_H^{\mathrm{right}}(j) - \sum_{j>i} d_H^{\mathrm{left}}(j)\,.
\]

Now let us look at the quantity 
$\sum_{j<i} d^{\mathrm{right}}(j) +\sum_{j>i} d^{\mathrm{left}}(j)$,
and see how much an edge $\{j,j'\}$ ($j<j'$) of $H$ contributes to it:
For $j<i<j'$, the contribution is $2$, while
 all other edges contribute~$1$. Hence the capacity
of the cut $S_i$ is $(k+1)(s-1)-e(H)-t(i)$, and
the minimum capacity of an $a$-$b$-cut is
$(k+1)(s-1)-e(H)-t(H)=m^*$ as required.

Thus, there is an integral flow $f$ with value $m^*$
as announced above. We now select the edges to be added to
$H$ as follows: 
For every directed edge $(i^+,j^-)$ of $G$, we add 
$f(i^+,j^-)$ copies of the edge $\{i,j\}$,
which yields the multigraph $H^*$. The number of added
edges is $m^*$, the value of the flow $f$, and
the capacity constraints guarantee that all left and
right degrees in $H^*$ are bounded by $k+1$. Moreover,
the defect of $H^*$ is at most $2t(H)$.
\end{proof}

\section{Concluding remarks}

We can cast 
the algorithm when $k=0$ sketched in the introduction
into the framework of our generic algorithm.
Namely, if we set $s=2$ and in Step~\ref{alg1:sort} we just
compare the two elements in each group, then we obtain
that algorithm.
The main feature of our algorithm is that every restart
only spoils one group.
This allows us to keep the effect of lies local.

In order to improve the upper bound of Theorem~\ref{thm:1}
by the method of this paper, we would need a sorting algorithm
with thickness $\littleo(s)$. (Moreover, to make use of the
sublinear thickness, we would need to choose $s$ superlinear in $k$,
and thus the sorting algorithm would be allowed to compare
every element with only $\littleo(s)$ others.)
The following proposition shows, however, that such a sorting algorithm
does not exist.
Thus, we need a different idea to improve Theorem \ref{thm:1}.

\begin{proposition}
Every (randomized) algorithm to sort an $s$-element set has thickness $\BigOmega(s)$ in expectation.
\end{proposition}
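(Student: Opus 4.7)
My plan is to apply Yao's minimax principle: it suffices to prove that under the uniform distribution on permutations $\pi\in S_s$, every deterministic sorting algorithm has expected thickness $\BigOmega(s)$. I will focus on a single value of $j$---say $j=\lfloor s/2\rfloor$---and lower bound $E[t(j)]$, since $\max_j t(j)\ge t(j)$. Writing $L=\pi^{-1}(\{1,\ldots,j-1\})$ and $U=\pi^{-1}(\{j+1,\ldots,s\})$, the quantity $t(j)$ is exactly the number of \emph{crossing} comparisons made by the algorithm, namely those between an element of $L$ and an element of~$U$.

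The key conceptual point is that crossing comparisons are the only queries whose answers can convey information about the partition $L$: the outcome of a crossing comparison is forced by $L$ alone (the $L$-position is reported as smaller), while the outcome of an internal comparison is determined by the internal orderings $\sigma_L$ or $\sigma_U$ of the two halves. Under the uniform prior, $\pi$ corresponds bijectively to a triple $(L,\sigma_L,\sigma_U)$ of independent uniform components, and $H(L)=\log\binom{s}{|L|}=\BigOmega(s)$.

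The plan is then an information-theoretic accounting. Since the transcript $T$ determines $\pi$, it determines $L$, so $I(L;T)=H(L)=\BigOmega(s)$. Expanding by the chain rule along the queries, $I(L;T)=\sum_k I(L;a_k\mid h_{k-1})$, where $h_{k-1}$ denotes the history before the $k$-th query $q_k$ and $a_k$ its answer. I will show that each term is at most $\Pr[q_k\text{ is crossing w.r.t.\ }L\mid h_{k-1}]$; summing and applying iterated expectation then yields $E[t(j)]\ge I(L;T)=\BigOmega(s)$, which gives the proposition.

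The main obstacle lies in the per-step inequality $I(L;a_k\mid h_{k-1})\le\Pr[q_k\text{ is crossing}\mid h_{k-1}]$ in the adaptive setting. Non-adaptively it is immediate: given $L$, a crossing answer is deterministic (so the term is at most one bit on crossings), while an internal answer depends only on $(\sigma_L,\sigma_U)\perp L$ (so the term vanishes). Adaptively, however, conditioning on $h_{k-1}$ may correlate $L$ with $(\sigma_L,\sigma_U)$ in the posterior, so the independence claim needs care. I plan to split $a_k$ jointly with the crossing indicator $X_k=\mathbf{1}[q_k\text{ is crossing w.r.t.\ }L]$ into the two cases $X_k=1$ and $X_k=0$, and use an exchange argument---permutations acting on positions within each half preserve both the prior and the internal segment of the history---to restore the required conditional independence inside internal steps.
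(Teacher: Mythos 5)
Your high-level strategy is the same as the paper's: reduce to deterministic algorithms via Yao's principle, and argue that finding out which elements lie left and right of the median forces many comparisons that cross the middle. The paper executes this with a direct probabilistic argument (each of the first $s/8$ ``fresh'' comparisons---those touching a previously untouched element---is a crossing with conditional probability at least $1/3$), which is elementary and closes cleanly. Your information-theoretic route, however, has a real gap at exactly the step you flagged, and I do not think the exchange argument you sketch can repair it.

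The problem is that the per-step inequality $I(L;a_k\mid h_{k-1}=h)\le\Pr[X_k=1\mid h_{k-1}=h]$ is simply not true in general. The event $X_k$ is not observed; the answer $a_k$ can carry information about $L$ through the (unobserved) indicator $X_k$, and that contribution is not bounded by $\Pr[X_k=1]$. Concretely, suppose $h$ leaves the position of $x$ uniform on $\{j-1,j+1\}$ and the position of $y$ uniform on $\{j-2,j+2\}$, independently. Then $\Pr[X_k=1\mid h]=1/2$, yet the single bit $a_k=\mathbf{1}[x<y]$ equals $\mathbf{1}[y \text{ is at position } j+2]$, so it fully determines $y$'s side and gives a full bit of information about $L$: $I(L;a_k\mid h)=1>1/2$. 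Even if you refuse to believe such an $h$ can arise for a particular algorithm, you would have to rule it out for \emph{every} algorithm, which is not addressed. Your exchange argument addresses only the cleanest sub-case (showing $I(L;a_k\mid X_k=0,h)=0$, i.e., that a comparison \emph{known} to be internal reveals nothing). Feeding that into the chain rule gives at best
\[
I(L;a_k\mid h)\;\le\; I\bigl(L;(a_k,X_k)\mid h\bigr)\;=\;H_b\bigl(\Pr[X_k=1\mid h]\bigr)\;+\;\Pr[X_k=1\mid h],
\]
and the extra binary-entropy term $H_b(p)$ ruins the accounting: since $H_b(p)\gg p$ for small $p$, the sum $\sum_k\bigl(H_b(p_k)+p_k\bigr)\ge H(L)=\Omega(s)$ no longer implies $\sum_k p_k=\Omega(s)$. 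Moreover, even establishing $I(L;a_k\mid X_k=0,h)=0$ is delicate once $h$ already contains answered crossing comparisons, because conditioning on those couples $L$ with the internal orderings; the symmetry you want to exploit by permuting within each half is broken by the crossing part of the history, not just by the internal part. In short, the outline identifies the right obstacle but underestimates it. The paper's proof avoids this entirely by bounding crossings from \emph{below} step by step (via the fresh-comparison device) rather than trying to bound the information per answer from above.
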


\begin{proof} 
By Yao's principle \cite{Yao},
it is enough to show that  every deterministic sorting 
algorithm $\mathcal{A}$ has expected thickness $\BigOmega(s)$ 
for a random input. In our case, we assume that the unknown linear
ordering of $X$ is chosen uniformly at random among all
the $s!$ possibilities.

In each step, the algorithm $\mathcal{A}$ compares some two elements
$x,y\in X$. Let us say that an element $x\in X$ is \emph{virgin}
at the beginning of some step if it hasn't been involved
in any previous comparison, and elements that
are not virgin are \emph{tainted}. A comparison is \emph{fresh}
if it involves at least one virgin element.

For notational convenience, we assume that $s$ is divisible by 8.
Let $L\subset X$ consist of the first $s/2$ elements in
the (random) input order (which is also the order of the output of
the algorithm), and let $R:=X\setminus L$.
Let $E_i$ be the event that the $i$th fresh comparison
is an \emph{$LR$-comparison}, i.e., a comparison in which
one of the two compared elements $x,y$ lies in $L$ and the other in $R$.
We claim that for each $i=1,2,\ldots,s/8$, the probability
of $E_i$ is at least $\frac13$.

To this end, let us fix (arbitrarily) the outcomes of all comparisons
made by $\mathcal{A}$ before the $i$th fresh comparison,
which determines the set of tainted elements, and let us also
fix the positions of the tainted elements in the input ordering.
We now consider the probability of $E_i$ \emph{conditioned}
on these choices.
The key observation is that the virgin elements
in the input ordering are still randomly distributed among the
remaining positions (those not occupied by the tainted elements).

Let $\ell$ be the number of virgin elements in $L$ and
$r$ the number of virgin elements in $R$; we have
$s/4\le \ell,r\le s/2$.

We distinguish two cases. First, let 
only one of the elements $x,y$ compared in the $i$th fresh comparison
be virgin. Say that $x$ is tainted and lies in $L$.
Then the probability of $E_i$ equals
$r/(\ell+r)\ge \frac13$.

Second, let both of $x$ and $y$ be virgin. 
Then the probability of $E_i$ is $2\ell r/((\ell+r)(\ell+r-1))$,
and since $s/4\le \ell,r\le s/2$, this probability exceeds~$\frac 49$. 

Thus, the probability of $E_i$
conditioned on \emph{every} choice of the outcomes of the
initial comparisons and positions of the tainted elements
 is at least $\frac 13$, and so the probability of $E_i$
for a random input is at least $\frac 13$ as claimed.
Thus, the expected number of $LR$-comparisons made by $\mathcal{A}$
is $\Omega(s)$. 

Let $a$ be the largest element of $L$, i.e., the $(s/2)$th element of $X$,
and let $b$ be the smallest element of $R$, i.e., the $(s/2+1)$st
element of $X$. Since we may assume that $\mathcal{A}$ doesn't
repeat any comparison, there is at most one comparison of $a$
with $b$. Every other $LR$-comparison compares elements
that have $a$ or $b$ (or both) between them. Thus, the expected
thickness of $\mathcal{A}$ is at least half of the
expected number of $LR$-comparisons, which is $\Omega(s)$.
\end{proof}

Note that the only thing which we needed in the proposition
above was that the corresponding ordered graph is simple and has
minimum degree at least $1$.

\section*{Acknowledgments}
We thank 
D\"om P\'alv\"olgyi for bringing the problem investigated in this
paper to our attention. This work has been started at the 7th Gremo
Workshop on Open Problems, Hof de Planis, Stels in Switzerland, July
6--10, 2009.  We also thank the participants of the workshop for the
inspiring atmosphere.


\begin{thebibliography}{99}
\bibitem{Aigner97}
  M. Aigner.
  Finding the maximum and the minimum.
  Discrete Applied Mathematics {\bf 74}  (1997) 1--12.
\bibitem{Deppe}
C. Deppe. Coding with feedback and searching with lies. In    
\emph{Entropy, Search, Complexity}, Bolyai Society Mathematical Studies 
16 (2007), pages 27--70.
\bibitem{DZ99}
  D. Dor and U. Zwick.
  Selecting the median.
  SIAM Journal on Computing \textbf{28}  (1999) 1722--1758.
\bibitem{FF56}
 L.~R. Ford Jr.\ and D.~R. Fulkerson.
 Maximal flow through a network.
 Canadian Journal of Mathematics {\bf 8} (1956) 399--404.
\bibitem{GPPW}
  D. Gerbner, D. P\'alv\"olgyi, B. Patk\'os, and G. Wiener.
  Finding the maximum and minimum elements with one lie.
  Submitted.
\bibitem{Knuth1973}
  D. E. Knuth.
  \emph{The Art of Computer Programming. Volume 3. Sorting and Searching}.
  Addison-Wesley Publishing Co.\ (1973)
\bibitem{Pelc}
  A. Pelc.
  Searching games with errors---fifty years of coping with liars.
  Theoretical Computer Science {\bf 270} (2002) 71--109.
\bibitem{Pohl72}
  I. Pohl.
  A sorting problem and its complexity.
  Communications of the ACM {\bf 15} (1972) 462--464.
\bibitem{RGL87}
  B. Ravikumar, K. Ganesan, and K.~B. Lakshmanan.
  On selecting the largest element in spite of erroneous information.
  Proceedings of 4th STACS,
  Lecture Notes in Computer Science {\bf 247} (1987) 88--99.
\bibitem{Yao}
  A.~C.-C. Yao.
  Probabilistic computations: Towards a unified measure of complexity.
  Proceedings of 18th FOCS (1977) 222--227.
\end{thebibliography}
\end{document}